\newcolumntype{P}[1]{>{\centering\arraybackslash}p{#1}}
\newtheorem {thm}{Theorem}[section]
\newtheorem {lem}[thm]{Lemma}
\newtheorem {cor}[thm]{Corollary}
\newtheorem {cnst}{Construction}[section]
\newtheorem {defn}{Definition}[section]
\newtheorem {exam}{Example}[section]
\newcommand{\SHF}{\text{$\mathsf{SHF}$}}
\begin{document}

\title{A tight bound on the size of certain separating hash families}
\author{Chuan Guo\thanks{C.~Guo's research is supported by NSERC CGS-M scholarship.} }
\author{Douglas~R.~Stinson\thanks{D.~Stinson's research is supported by NSERC discovery grant 203114-11.}}
\affil{David R.\ Cheriton School of Computer Science\\University of Waterloo\\
Waterloo, Ontario, N2L 3G1, Canada}
\date{\today}

\maketitle

\begin{abstract}
In this paper, we present a new lower bound on the size of separating hash families of type $\{w_1^{q-1},w_2\}$ where $w_1 < w_2$. Our result extends the paper by Guo et al.\ on binary frameproof codes \cite{GST2015}. This bound compares  well against known general bounds, and is especially useful when trying to bound the size of strong separating hash families. We also show that our new bound is tight by constructing hash families that meet
the new bound with equality.
\end{abstract}

\section{Introduction}

Let $X,Y$ be finite sets of size $n$ and $q$, respectively. Let $\mathcal{F}$ be a family of functions from $X$ to $Y$ with $\mathcal{F} = N$. Given positive integers $w_1,w_2,\ldots,w_t$, we say that $\mathcal{F}$ is a $\{w_1,w_2,\ldots,w_t\}$-\emph{separating hash family}, denoted $\SHF(N;n,q,\{w_1,w_2,\ldots,w_t\})$, if for every choice of subsets $X_1,X_2,\ldots,X_t \subseteq X$ with $|X_i| = w_i$ for $i = 1,\ldots,t$ and $X_i \cap X_j = \emptyset$ for $i \neq j$, there exists some $f \in \mathcal{F}$ such that $f(X_i) \cap f(X_j) = \emptyset$ for $i \neq j$. Such $f$ is said to \emph{separate} the sets $X_1,\ldots,X_t$. The parameter multiset $\{w_1,w_2,\ldots,w_t\}$ is called the \emph{type} of the SHF.

The notion of separating hash families was introduced by Stinson et al.\ in \cite{SRC1998}. It is a generalization of many other combinatorial structures such as perfect hash families \cite{Mehlhorn1982}, frameproof codes \cite{Boneh1998}, and secure frameproof codes \cite{STW2000}. We would like to study bounds on the size of separating hash families when given the other parameters.

It is often useful to represent separating hash families in matrix form. When given an $\SHF(N;n,$\\$q,\{w_1,w_2,\ldots,w_t\})$, construct an $N \times n$ $q$-ary matrix $A$ with $A(i,j) = f_i(x_j)$ where $f_1,\ldots,f_N$ is some fixed ordering of the functions in $\mathcal{F}$ and $x_1,\ldots,x_n$ is some fixed ordering of the elements of $X$. This matrix is called the \emph{representation matrix} of $\mathcal{F}$. Specializing our definition of an SHF to this form, the equivalent property for when a matrix is the representation matrix of an SHF is as follows.

\begin{thm}
\label{t-equiv}
An $N \times n$ $q$-ary matrix $A$ is the representation matrix of an $\SHF(N;n,q,\{w_1,w_2,\ldots,$\\$w_t\})$ if and only if, for every choice of $t$ column sets $C_1,\ldots,C_t$ in $A$ where $C_i \cap C_j = \emptyset$ for $i \neq j$ and $|C_i| = w_i$ for $i = 1,\ldots,t$, there exists a row $r$ such that $M(r,c_i) \neq A(r,c_j)$ whenever $c_i \in C_i$ and $c_j \in C_j$ where $i \neq j$.
\end{thm}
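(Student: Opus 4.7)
The plan is to observe that this equivalence is essentially an unpacking of the definition of an SHF, once translated through the explicit correspondence between the function family $\mathcal{F}$ and its representation matrix $A$. The first step is to record the two bijections built into the construction: rows of $A$ correspond bijectively to the functions $f_1,\ldots,f_N$ in $\mathcal{F}$, and columns of $A$ correspond bijectively to the elements $x_1,\ldots,x_n$ of $X$. Under these bijections, any choice of pairwise disjoint column sets $C_1,\ldots,C_t$ in $A$ with $|C_i|=w_i$ corresponds one-to-one with a choice of pairwise disjoint subsets $X_1,\ldots,X_t \subseteq X$ with $|X_i|=w_i$, by letting $X_i$ consist of those $x_c$ with $c \in C_i$.

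The second step is to translate the separating condition itself. By the definition of separation, a single function $f \in \mathcal{F}$ separates $X_1,\ldots,X_t$ precisely when $f(X_i) \cap f(X_j) = \emptyset$ for all $i \neq j$, which is in turn equivalent to saying that for every $x \in X_i$ and every $y \in X_j$ with $i \neq j$, we have $f(x) \neq f(y)$. Substituting $A(r,c) = f_r(x_c)$, this says that the row $f_r$ separates $X_1,\ldots,X_t$ if and only if $A(r,c_i) \neq A(r,c_j)$ for all $c_i \in C_i$ and $c_j \in C_j$ with $i \neq j$, which is exactly the row condition appearing in the theorem.

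Combining the two steps, I would argue both directions in a single sentence: $\mathcal{F}$ is an $\SHF(N;n,q,\{w_1,\ldots,w_t\})$ iff for every admissible $(X_1,\ldots,X_t)$ some $f_r$ separates them, iff for every admissible $(C_1,\ldots,C_t)$ some row $r$ of $A$ satisfies the inequality condition, which is the matrix-side property. There is no real obstacle here; the only thing requiring a moment's care is the logical equivalence between the set-level disjointness condition $f(X_i)\cap f(X_j) = \emptyset$ and the element-level inequality $f(x) \neq f(y)$ for all $x \in X_i$, $y \in X_j$, which I would state explicitly to avoid any ambiguity in the translation.
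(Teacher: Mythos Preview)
Your proposal is correct and is exactly the straightforward unpacking of definitions one would expect; there is nothing more to it than the two bijections you describe and the elementwise reformulation of $f(X_i)\cap f(X_j)=\emptyset$. The paper itself does not supply a proof of this theorem at all, treating it as an immediate restatement of the SHF definition in matrix language, so your write-up is, if anything, more detailed than what the authors provide.
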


A list of $t$ column sets $(C_1,\ldots,C_t)$, as specified in Theorem \ref{t-equiv}, 
will be termed a {\it column set $t$-tuple}.

We will only consider SHFs with $\sum_i w_i \leq n$ and $q \geq t$ in order to avoid vacuous cases. The following properties regarding SHFs with different parameter sets $\{w_1,\ldots,w_t\}$ are easy to prove.

\begin{thm}
\label{param-change}
Let $\mathcal{F}$ be an $\SHF(N;n,q,\{w_1,w_2,\ldots,w_t\})$ with $\sum_i w_i \leq n$ and $q \geq t$.
\begin{enumerate}[(i)]
\item If $w_1' \leq w_1$ then $\mathcal{F}$ is also an $\SHF(N;n,q,\{w_1',w_2,\ldots,w_t\})$.
\item If $w_1' = w_1 + w_2$ then $\mathcal{F}$ is also an $\SHF(N;n,q,\{w_1',w_3,\ldots,w_t\})$.
\end{enumerate}
\end{thm}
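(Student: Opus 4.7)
The two parts are both structural observations about column set tuples, so the plan is essentially to pull back each new column set tuple to a column set tuple of the original type.

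For part (i), the plan is to take an arbitrary column set $t$-tuple $(C_1', C_2, \ldots, C_t)$ of the new type, with $|C_1'| = w_1' \leq w_1$, and extend $C_1'$ to a larger set $C_1$ of size $w_1$ that is still disjoint from $C_2, \ldots, C_t$. The hypothesis $\sum_i w_i \leq n$ guarantees that the $n - (w_2 + \cdots + w_t)$ columns outside of $C_2 \cup \cdots \cup C_t$ contain at least $w_1$ elements, so such an extension exists. By the original SHF property, some row $r$ separates $(C_1, C_2, \ldots, C_t)$; since $C_1' \subseteq C_1$, the same row separates $(C_1', C_2, \ldots, C_t)$.

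For part (ii), the plan is dual: given a column set $(t-1)$-tuple $(C_1', C_3, \ldots, C_t)$ of the new type with $|C_1'| = w_1 + w_2$, arbitrarily partition $C_1'$ into disjoint subsets $C_1$ and $C_2$ with $|C_1|=w_1$ and $|C_2|=w_2$. Then $(C_1, C_2, C_3, \ldots, C_t)$ is a valid column set $t$-tuple for the original SHF (the parameter checks $\sum_i w_i \leq n$ and $q \geq t \geq t-1$ carry over), so there exists a row $r$ with $A(r,c_i) \neq A(r,c_j)$ whenever $c_i \in C_i$, $c_j \in C_j$, and $i \neq j$. For any $c \in C_1'$ and $c_j \in C_j$ with $j \geq 3$, $c$ lies in either $C_1$ or $C_2$, so the inequality $A(r,c) \neq A(r,c_j)$ follows directly. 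Thus the same row $r$ witnesses separation of $(C_1', C_3, \ldots, C_t)$.

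Neither part has a real obstacle; the main things to be careful about are (a) checking that the parameter inequalities $\sum_i w_i \leq n$ and $q \geq t$ pass through to the new parameter sets, and (b) noting that in part (ii) the separation conditions within $C_1'$ that the original row happens to also guarantee are harmless — the new SHF simply does not require them. Both arguments are essentially one line once the appropriate column sets are named, so the writeup should present each part as a short direct verification using Theorem~\ref{t-equiv}.
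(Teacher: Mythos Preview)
Your proposal is correct and is exactly the natural direct argument; the paper itself omits the proof entirely, stating only that these properties ``are easy to prove.'' Your extension argument for (i) and partition argument for (ii) are the standard verifications, and your attention to the parameter inequalities $\sum_i w_i \leq n$ and $q \geq t$ is appropriate.
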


We now present some known results on general separating hash families.

\begin{thm}[\cite{BESZ2008}]
\label{BESZ2008}
If there exists an $\SHF(N;n,q,\{w_1,\ldots,w_t\})$ with $w_1,w_2 \leq w_i$ for $i = 3,\ldots,t$, then $$n \leq \gamma q^{\lceil \frac{N}{u-1} \rceil},$$ where $u = \sum_i w_i$ and $\gamma = (w_1 w_2 + u - w_1 - w_2)$.
\end{thm}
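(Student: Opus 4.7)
The plan is to induct on $k = \lceil N/(u-1) \rceil$, aiming to show $n \leq \gamma q^k$, so that each block of $u-1$ rows contributes a multiplicative factor of $q$ (and not the trivial $q^{u-1}$) to the permitted column count.

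For the base case $k = 1$ (so $N \leq u-1$) I would focus on the cleanest instance $N = 1$: partition the $n$ columns into at most $q$ value classes. Suppose some class contains at least $w_1+w_2$ columns; since $n \geq u$ in the non-vacuous case, I can fill $(C_1,C_2)$ with $w_1$ and $w_2$ columns from that class, and fill $(C_3,\ldots,C_t)$ with $u - w_1 - w_2$ of the remaining columns, producing a column-set $t$-tuple that the single row fails to separate — a contradiction with Theorem~\ref{t-equiv}. Hence each value class has at most $w_1+w_2-1$ columns, so $n \leq q(w_1+w_2-1) \leq q\gamma$, where the last step uses $(w_1-1)(w_2-1) \geq 0$ together with $u \geq w_1 + w_2$. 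The case $1 < N \leq u-1$ can be handled by a refined bipartite covering argument exploiting $w_1, w_2 \leq w_i$ for $i \geq 3$, which allows more rows but also demands a slightly more delicate construction of the unseparated tuple.

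For the inductive step I would partition the columns into cells according to their value tuple on the first $u-1$ rows; within any cell those $u-1$ rows never separate, so the restriction to the last $N - (u-1)$ rows is still an SHF of the same type, and by the inductive hypothesis each cell has at most $\gamma q^{k-1}$ columns. The naive sum over $q^{u-1}$ cells overshoots by $q^{u-2}$; tightening it to $n \leq \gamma q^k$ requires that only about $q$ (rather than $q^{u-1}$) cells can attain sizes approaching the inductive bound. This sharper count comes from applying the same bipartite covering reasoning as in the base case, now at the block level, essentially showing that cells with many columns must have pairwise ``disjoint ownership'' over the remaining rows.

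The main obstacle — and the combinatorial heart of the proof — is the bipartite covering lemma that drives both the base case and the block-level refinement: given any $w_1 w_2 + 1$ columns over $u-2$ rows, always find $(C_1,C_2)$ of sizes $(w_1,w_2)$ such that every row fails to separate them. The appearance of the product $w_1 w_2$ in $\gamma = w_1 w_2 + u - w_1 - w_2$ is a direct trace of this Zarankiewicz-flavoured lemma, and making it rigorous requires careful pigeonholing tailored to the sizes $w_i$.
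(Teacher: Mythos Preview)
This theorem is not proved in the present paper: it is quoted from \cite{BESZ2008} as a known background result, with no argument supplied here. Consequently there is no ``paper's own proof'' to compare your proposal against.

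For what it is worth, your outline is broadly in the spirit of the proof in \cite{BESZ2008}: induction on $k=\lceil N/(u-1)\rceil$, with the combinatorial weight concentrated in the base case $N=u-1$, where the constant $\gamma=w_1w_2+u-w_1-w_2$ genuinely arises from a Zarankiewicz-type pigeonhole on pairs of column sets of sizes $w_1$ and $w_2$. Your $N=1$ argument is correct but far weaker than what is needed; the real base case is $N=u-1$, which you defer. In the inductive step, your partition into $q^{u-1}$ cells followed by the claim that ``only about $q$ cells can be large'' is not how \cite{BESZ2008} proceeds, and as stated it is a genuine gap: you have not indicated any mechanism that prevents many cells from being simultaneously large. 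The reduction in \cite{BESZ2008} instead produces, from $u-1$ chosen rows, a \emph{single} subset of at least $n/q$ columns on which those rows separate nothing, yielding directly an $\SHF(N-(u-1);\lceil n/q\rceil,q,\{w_1,\ldots,w_t\})$ and hence a clean recursion $n\le q\cdot \gamma q^{k-1}$. Filling in either your cell-counting claim or this single-subset reduction is exactly the missing step.
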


\begin{thm}[\cite{BazTran2011}]
\label{BazTran2011}
If there exists an $\SHF(N;n,q,\{w_1,\ldots,w_t\})$, then $$n \leq (u-1)q^{\lceil \frac{N}{u-1} \rceil},$$ where $u = \sum_i w_i$.
\end{thm}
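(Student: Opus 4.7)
The plan is to argue by contradiction via a block-partitioning pigeonhole argument of Blackburn type, adapted to the separating hash family setting. Set $k = \lceil N/(u-1) \rceil$ and assume toward contradiction that $n \geq (u-1) q^k + 1$. Partition the $N$ rows of the representation matrix $A$ into $k$ blocks $R_1, \ldots, R_k$, each of size at most $u-1$; this is possible because $k(u-1) \geq N$.

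The key step is to select, by iterated pigeonhole across the $k$ blocks, a collection of $u$ distinct columns $c_1, \ldots, c_u$ together with a designated pair (a ``collision witness'') inside this collection for each block $R_i$, such that the two columns in the pair for $R_i$ agree on every row of $R_i$. Within a block, the restriction of a column is a vector in $[q]^{|R_i|}$, hence lies in a set of at most $q^{u-1}$ patterns, and a large pool of columns therefore contains many columns agreeing on that block. Starting from $n \geq (u-1) q^k + 1$ columns and proceeding block by block, one maintains a shrinking pool together with a growing set of collision witnesses, arranged so that the $k$ witnesses eventually fit inside a common set of $u$ columns whose collision structure is simple --- in the cleanest version, all witnesses share a single ``anchor'' column.

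Once the $u$ columns and the $k$ witnesses are in hand, one chooses a partition $(X_1, \ldots, X_t)$ of $\{c_1, \ldots, c_u\}$ with $|X_j| = w_j$ such that every witness pair crosses the partition. For an anchor-based construction this is immediate: place the anchor alone in any part containing it, so that it is separated from the other $u-1$ columns and every anchor-involving witness pair is cut. Then for each row $r$, the witness pair associated with the block containing $r$ agrees in row $r$ and lies across the partition, so $r$ fails to separate $(X_1, \ldots, X_t)$; by Theorem \ref{t-equiv} this contradicts the SHF property.

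The main obstacle is coordinating the iterative pigeonhole with the global constraint that the collision witnesses be realizable inside a $u$-element column set admitting a valid $(w_1, \ldots, w_t)$-partition. The improvement from the factor $\gamma$ of Theorem \ref{BESZ2008} to the factor $u-1$ here is precisely the margin an anchor-style selection is designed to extract: the challenge is to argue quantitatively that $(u-1) q^k + 1$ starting columns really do allow for $k$ successive pigeonhole rounds while preserving a fixed anchor (or a comparably simple collision structure) throughout.
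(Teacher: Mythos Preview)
This theorem is not proved in the paper; it is quoted from \cite{BazTran2011} as background, and no argument for it appears anywhere in the text. There is therefore no paper proof to compare your proposal against.

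As to your proposal on its own merits: the overall Blackburn-type strategy (row-block partition, iterated pigeonhole to force per-block collisions, then a $(w_1,\ldots,w_t)$-partition of the resulting $u$ columns that cuts every collision pair) is indeed the framework Bazrafshan and Tran use, but two points deserve comment. First, your block partition is oriented the wrong way for the bound as stated: the standard argument partitions the $N$ rows into $u-1$ blocks of size at most $k = \lceil N/(u-1)\rceil$, not into $k$ blocks of size at most $u-1$. With $u-1$ blocks one needs only $u-1$ collision pairs, one per block, and these fit naturally among $u$ columns; with your $k$ blocks one would need $k$ collision pairs among only $u$ columns, and the hypothesis $n > (u-1)q^{k}$ has no direct relation to the per-block pattern count $q^{u-1}$, so the pigeonhole arithmetic you sketch does not get off the ground. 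Second, as you yourself flag, the proposal stops short of the decisive quantitative step: showing that $(u-1)q^{k}+1$ columns suffice to run the iterated pigeonhole while keeping an anchor (or an equivalent collision structure) fixed throughout, and then arranging a $(w_1,\ldots,w_t)$-partition that cuts every witness pair even when no $w_j$ equals $1$, is precisely the content of the Bazrafshan--Tran improvement over \cite{BESZ2008}. Your write-up names this obstacle but does not resolve it, so what you have is a correct identification of the method together with a candid acknowledgement that the heart of the argument is still missing.
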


\begin{thm}[\cite{BazTran2013}]
\label{BazTran2013}
If there exists an $\SHF(N;n,q,\{w_1,\ldots,w_t\})$ with $t \geq 3$ and $u = \sum_i w_i \geq 4$, then $$n \leq (u-1)q^{\lceil \frac{N}{u-1} \rceil}.$$
\end{thm}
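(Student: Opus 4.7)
My plan is to combine an iterative pigeonhole argument on the rows of the representation matrix with a reduction to a smaller SHF on $u$ columns, and then close the argument with a counting step that uses the hypothesis $t \geq 3$, $u \geq 4$.

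First I would set $k = \lceil N/(u-1) \rceil$ and argue by contradiction, assuming $n \geq (u-1) q^k + 1$. Applying pigeonhole to the first row produces a value class of size at least $\lceil n / q \rceil \geq (u-1)q^{k-1} + 1$, and iterating through rows $2, \ldots, k$ on successively largest classes yields a set $T$ of at least $u$ columns that all agree on each of rows $1, \ldots, k$.

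Next I would use the SHF property to reduce to a smaller instance. Since $|T| \geq u = \sum_i w_i$, we can choose pairwise disjoint $C_1, \ldots, C_t \subseteq T$ with $|C_i| = w_i$, and indeed any such column set $t$-tuple in $T$ is a legitimate input to the SHF property. None of rows $1, \ldots, k$ can separate such a tuple because all columns in $T$ coincide on those rows, so the separation must be supplied by a row from $\{k+1, \ldots, N\}$. Restricting the representation matrix to $T$ and to rows $k+1, \ldots, N$ therefore produces an $\SHF(N-k; u, q, \{w_1, \ldots, w_t\})$; equivalently, every ordered column set $t$-tuple on the $u$ columns of $T$ with part sizes $(w_1, \ldots, w_t)$ is separated by some single row out of only $N - k$.

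The main obstacle is the third step: showing that such a compact SHF on exactly $u$ columns cannot exist, which is where the contradiction must come from. The natural approach is a counting argument. Each row $r \in \{k+1, \ldots, N\}$ induces a value-partition of the $u$ columns of $T$, and $r$ separates a $t$-tuple $(C_1, \ldots, C_t)$ only when that value-partition is a refinement of $(C_1, \ldots, C_t)$; this caps the number of tuples a single row can separate in terms of the multinomial coefficients of its value-partition. Summing over the at most $N - k \leq (u-2)\lceil N/(u-1)\rceil$ rows and comparing with the total number $u!/(w_1! \cdots w_t!)$ of ordered tuples to be separated should yield the desired contradiction. I expect this counting to be the delicate part, and I suspect the hypotheses $t \geq 3$ and $u \geq 4$ enter precisely here: they guarantee enough distinct partitions of $T$ and enough slack in the multinomial counts to push the inequality through, with the small-parameter cases ($t = 2$ or $u \leq 3$) being handled separately or by the earlier Theorems~\ref{BESZ2008} and~\ref{BazTran2011}.
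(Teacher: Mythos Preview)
The paper gives no proof of this statement: Theorem~\ref{BazTran2013} is quoted from \cite{BazTran2013} as background, with no argument supplied, so there is nothing here to compare your attempt against. Note also that the displayed inequality coincides verbatim with that of Theorem~\ref{BazTran2011}; as printed, the result is therefore an immediate consequence of Theorem~\ref{BazTran2011} and the extra hypotheses $t\geq 3$, $u\geq 4$ are idle. (The worked example in Section~4, where Theorem~\ref{BazTran2013} is invoked to obtain $n \leq 6\cdot 3^9 + 2 - 2\sqrt{3\cdot 3^9 + 1}$, together with the distinct columns in Table~1, indicates that what is recorded here is a misprint of a strictly sharper bound from \cite{BazTran2013}.)

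On your argument itself: the pigeonhole cascade in your first two steps is sound and is the standard opening for bounds of this shape, but the third step cannot be closed as you describe. After restricting to $u$ columns that agree on rows $1,\ldots,k$, you have $N-k$ rows remaining, and since $k=\lceil N/(u-1)\rceil$ this quantity grows linearly with $N$. On the other hand, the minimum number of rows needed for an $\SHF(\,\cdot\,;u,q,\{w_1,\ldots,w_t\})$ on exactly $u$ columns is a fixed finite constant depending only on $u,q$ and the $w_i$ (for instance, six rows suffice for type $\{1,1,2\}$ on four columns over a ternary alphabet). Hence for all sufficiently large $N$ the restricted $(N-k)$-row matrix \emph{can} be a valid SHF on those $u$ columns, and no contradiction arises from the counting comparison you sketch. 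The hypotheses $t\geq 3$ and $u\geq 4$ do not rescue this: they bound the minimum row count of the small instance from below by a constant, whereas your argument would need that minimum to grow with $N$.
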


In the remainder of this paper, we will present a construction and a new bound on the size of an SHF of the type $\{w_1^{q-1},w_2\}$. Using Theorem \ref{param-change}, one can extend this result to  bounds for more general types of SHF, such as strong separating hash families \cite{SarkarStinson2001}.

\section{A construction for SHF of type $\{w_1^{q-1},w_2\}$}

We first give a construction for SHF of type $\{w_1^{q-1},w_2\}$.

\begin{cnst}
\label{construction1}
Fix positive integers $n,q,w_1,w_2$ with $w_1 < w_2$ and $w_2 + (q-1)w_1 \leq n$. Let $$\mathcal{S} = \{(C_1,\ldots,C_{q-1}) : C_i \subseteq \{1,\ldots,n\} \text{ with } |C_i| = w_1 \text{ for all } i \text{ and } C_i \cap C_j = \emptyset \text{ if } i \neq j\},$$ and let $$\mathcal{T} = \{(C_1,\ldots,C_{q-1}) \in \mathcal{S} : c_1 < c_2 < \ldots < c_{q-1} \text{ where } c_i \text{ is the smallest element of } C_i\}.$$
Now for $(C_1,\ldots,C_{q-1}) \in \mathcal{T}$, let $r_{(C_1,\ldots,C_{q-1})}$ be the vector
$$r_{(C_1,\ldots,C_{q-1})}(i) =
  \begin{cases} 
      j    \hfill & \text{ if $i \in C_j$} \\
      0    \hfill & \text{ otherwise.} \\
  \end{cases}$$
Let $A$  be the matrix that contains all rows $r_{(C_1,\ldots,C_{q-1})}$ for every $(C_1,\ldots,C_{q-1}) \in \mathcal{T}$.
\end{cnst}

\begin{thm}
\label{tight}
The matrix $A$ from Construction \ref{construction1} is  an $\SHF(N;n,q,\{w_1^{q-1},w_2\})$ where $$N = \frac{1}{(q-1)!} {n \choose w_1} {n-w_1 \choose w_1} \cdots {n-(q-2)w_1 \choose w_1}.$$
\end{thm}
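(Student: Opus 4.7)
The argument splits naturally into two parts: verifying the row count $N$, and verifying the separation property. Neither part is particularly deep; the main content is just matching the row construction to Theorem \ref{t-equiv}.

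First, I would count $|\mathcal{T}|$. Choosing an element of $\mathcal{S}$ amounts to choosing $q-1$ pairwise disjoint $w_1$-subsets of $\{1,\ldots,n\}$ in order, so $|\mathcal{S}| = \binom{n}{w_1}\binom{n-w_1}{w_1}\cdots\binom{n-(q-2)w_1}{w_1}$. Given an unordered collection $\{C_1,\ldots,C_{q-1}\}$ of disjoint $w_1$-subsets, all $q-1$ smallest elements are distinct (by disjointness), so exactly one of the $(q-1)!$ orderings lies in $\mathcal{T}$. Hence $N = |\mathcal{T}| = |\mathcal{S}|/(q-1)!$, matching the stated formula.

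Next, I would verify the separation property using Theorem \ref{t-equiv}. Fix an arbitrary column set $q$-tuple $(D_1,\ldots,D_{q-1},D_q)$ with $|D_i| = w_1$ for $i \leq q-1$, $|D_q|=w_2$, and all $D_i$ pairwise disjoint. Since the $D_i$ for $i \leq q-1$ are pairwise disjoint, their smallest elements are distinct, so there is a unique permutation $\sigma$ of $\{1,\ldots,q-1\}$ for which $C_i := D_{\sigma(i)}$ satisfies $\min C_1 < \min C_2 < \cdots < \min C_{q-1}$. By construction $(C_1,\ldots,C_{q-1}) \in \mathcal{T}$, so the row $r := r_{(C_1,\ldots,C_{q-1})}$ appears in $A$.

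Finally I would read off the values of $r$ on the column sets. For a column $c \in D_{\sigma(i)} = C_i$, we have $r(c) = i \in \{1,\ldots,q-1\}$. For a column $c \in D_q$, since $D_q$ is disjoint from every $C_i$, we have $r(c) = 0$. Thus the values taken by $r$ on $D_{\sigma(1)},\ldots,D_{\sigma(q-1)},D_q$ are $1,2,\ldots,q-1,0$ respectively, all distinct, so $r$ separates the $q$-tuple. Applying Theorem \ref{t-equiv} completes the proof. The only step that requires any attention is confirming that the ordering $\sigma$ exists and is consistent with the membership condition for $\mathcal{T}$, but this is immediate from disjointness; the rest is essentially bookkeeping.
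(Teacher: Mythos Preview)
Your proposal is correct and follows essentially the same approach as the paper's proof: count $|\mathcal{T}|$ via the $(q-1)!$-to-$1$ correspondence between $\mathcal{S}$ and $\mathcal{T}$, then separate an arbitrary column set $q$-tuple by permuting the $w_1$-sized parts into increasing-minimum order and reading off the corresponding row of $A$. If anything, you spell out a bit more detail than the paper does (e.g., why the minima are distinct and why $D_q$ receives value $0$), but the structure is identical.
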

\begin{proof}
Let $C_0,\ldots,C_{q-1}$ be pairwise disjoint subsets of $\{1,\ldots,n\}$ such that $|C_0| = w_2$ and $|C_i| = w_1$ for $i = 1,\ldots,q-1$. By construction, there exists a unique permutation $\pi$ over $\{1,\ldots,q-1\}$ such that the $(q-1)$-tuple $(C_{\pi(1)},\ldots,C_{\pi(q-1)})$ is contained in $\mathcal{T}$. The column set $q$-tuple is separated by the row $r_{(C_{\pi(1)},\ldots,C_{\pi(q-1)})}$ in $A$. Thus $A$ is the representation matrix of an SHF of type $\{w_1^{q-1},w_2\}$.

Clearly $A$ has $n$ columns and $|\mathcal{T}|$ rows. For any $(C_1,\ldots,C_{q-1}) \in \mathcal{T}$, every permutation $\pi$ over $\{1,\ldots,q-1\}$ gives a unique element $(C_{\pi(1)},\ldots,C_{\pi(q-1)}) \in \mathcal{S}$. Since there are 
\[{n \choose w_1} {n-w_1 \choose w_1} \cdots {n-(q-2)w_1 \choose w_1}\] elements in $\mathcal{S}$, we have that $$|\mathcal{T}| = \frac{1}{(q-1)!} {n \choose w_1} {n-w_1 \choose w_1} \cdots {n-(q-2)w_1 \choose w_1},$$ as desired.
\end{proof}

\section{A bound for SHF of type $\{w_1^{q-1},w_2\}$}

In this section, for a certain range of values $n$, we prove a lower bound on $N$ 
for existence of an $\SHF(N;n,q,\{w_1^{q-1},w_2\})$, where $w_1^{q-1}$ denotes the multiset consisting of $q-1$ copies of $w_1$ and  $w_1 < w_2$. Whenever it is applicable, this lower bound is  tight, in view of Theorem \ref{tight}.

Our bound is in fact a generalization of Theorem 2.2.3 in \cite{GST2015}, which we provide here for reference.

\begin{thm}[\cite{GST2015}]
\label{GST}
Let $w,N$ be positive integers such that $w \geq 3$ and $w+1 \leq N \leq 2w+1$. Suppose there exists an $\SHF(N;n,2,\{1,w\})$. Then $n \leq N$.
\end{thm}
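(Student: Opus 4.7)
The plan is to assume for contradiction that such an $\SHF(N;n,2,\{1,w\})$ exists with $n \geq N+1$; restricting to an $(N+1)$-column sub-matrix we may take $n = N+1$. The contradiction will come from a double-counting argument on $(w+1)$-subsets of columns that fails outright for $N \in [w+1, 2w]$ and becomes an equality at the boundary $N = 2w+1$, where integrality takes over.

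First, reformulate the separating condition. For every $(w+1)$-subset $S$ of columns and every distinguished column $c \in S$, the SHF axiom provides a row separating $\{c\}$ from $S \setminus \{c\}$; in binary, such a row is one whose restriction to $S$ has Hamming weight $1$ or $w$ with $c$ at the distinguishing position. Call such a row a \emph{singleton row} for $(S,c)$. A single row can serve as a singleton row for at most one $c$ inside a given $S$, so each $S$ requires at least $w+1$ distinct singleton rows covering all $w+1$ positions.

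Second, perform a global count. Writing $a_r$ for the number of zeros in row $r$, a direct binomial count (using $\binom{k}{w}=0$ for $k<w$, which is where $w \geq 3$ enters cleanly) shows that the number of $(w+1)$-subsets on which $r$ is a singleton row equals $f(a_r) := (n-a_r)\binom{a_r}{w} + a_r\binom{n-a_r}{w}$. A short ratio analysis identifies $\max_{a} f(a)$ as $\binom{n-1}{w}$ when $n \leq 2w+1$ and as $2\binom{n-2}{w}$ when $n = 2w+2$. Comparing the supply $\sum_r f(a_r)$ with the demand $(w+1)\binom{n}{w+1}$ and substituting $n = N+1$, the bound $\sum_r f(a_r) \leq N \max_a f(a)$ yields
\[
(w+1)\binom{N+1}{w+1} \;\leq\; N\binom{N}{w}
\]
for $N \in [w+1, 2w]$, which simplifies to $N+1 \leq N$, the desired contradiction.

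Finally, at the boundary $N = 2w+1$ the analogous inequality is an equality, so the argument must exploit forced rigidity. Equality compels every row $r$ to have $a_r \in \{2, 2w\}$ (the unique maximizers of $f$ when $n = 2w+2$ and $w \geq 3$), so each row has exactly two ``special'' columns --- its minority-valued columns --- and every $(S,c)$ pair is covered by exactly one singleton row. Fixing any column $c$ and letting $d_c$ denote the number of rows in which $c$ is special, each such row contributes exactly $\binom{2w}{w}$ distinct singleton-covers of pairs $(S,c)$ with $c \in S$; the ``covered exactly once'' condition therefore forces $d_c \binom{2w}{w} = \binom{2w+1}{w}$, i.e., $d_c = (2w+1)/(w+1)$. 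Since $(w+1) \nmid (2w+1)$ for any $w \geq 1$, this is not an integer, the desired contradiction. The main technical obstacle is precisely this boundary step: the equality case of the principal inequality must be unpacked and closed by integrality, since direct comparison of supply and demand is too crude there.
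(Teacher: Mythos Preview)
Your argument is correct. Note, however, that the paper does not supply its own proof of this theorem: it is quoted from \cite{GST2015} and then used as a black box (in particular, the $q=2$ sub-case inside the proof of Theorem~\ref{main-theorem} is dispatched by citing Theorem~\ref{GST} rather than re-argued). So there is no in-paper proof to compare against directly.

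That said, your method is exactly the counting paradigm the paper adopts for its generalisation (Theorem~\ref{main-theorem}): bound the number of column-set tuples a single row can separate, identify the optimal row weight, and divide into the total number of tuples that must be separated. For $q=2$, $w_1=1$, $w_2=w$, your function $f(a)=(n-a)\binom{a}{w}+a\binom{n-a}{w}$ is the symmetrised version of the paper's $T^{(1)}_{1,w,n}$, and your observation that $f$ is maximised at $a=1$ (value $\binom{n-1}{w}$) for $n\le 2w+1$ is the $q=2$ instance of Lemma~\ref{global-max}. The range $w+1\le N\le 2w$ then falls out immediately, just as Theorem~\ref{main-theorem} would give for $q=2$.

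Where you genuinely go beyond what the present paper contains is the endpoint $N=2w+1$ (equivalently $n=2w+2$). There the supply--demand inequality collapses to an equality, and you close the gap by a rigidity-plus-integrality argument: equality forces every row to have exactly two minority entries, an exact-cover condition on $(S,c)$ pairs then forces each column to be ``special'' in precisely $\binom{2w+1}{w}/\binom{2w}{w}=(2w+1)/(w+1)$ rows, which is never an integer. This step is sound; the hypothesis $w\ge 3$ is used exactly where you need $\binom{2}{w}=0$ (so that the unique maximisers of $f$ at $n=2w+2$ are $a\in\{2,2w\}$) and where you need a majority column to be unable to serve as the distinguished element (since $w>2$). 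The paper's own framework does not reach this endpoint for $q=2$ --- its admissible range is $w+1\le n\le 2w$ --- which is presumably why it falls back on the citation.
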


We will extend the idea of the proof of Theorem 2.2.3 in \cite{GST2015} by counting the total number of column set $q$-tuples separated in an SHF versus the number of column set $q$-tuples separated by a single row in the SHF. We can then give a lower bound on the number of rows required by dividing these two quantities. The following definition will be used throughout this section.

\begin{defn}
Let $x \in Q^n$ where $Q = \{0,1,\ldots,q-1\}$. We say that $x$ is of \emph{weight} $(i_1,i_2,\ldots,i_{q-1})$ if the number of entries equal to $k$ in $x$ is exactly $i_k$, for each $k = 1,\ldots,q-1$. The number of entries equal to 0 is thus $i_0 = n - \sum_{k=1}^{q-1} i_k$.
\end{defn}

The next definition gives a simplified notation for counting the number of column set $q$-tuples separated by a row of weight $(i_1,i_2,\ldots,i_{q-1})$. The correctness of this fact will be proven in Lemma \ref{num-separated}.

\begin{defn}
Let $w_1,w_2$ be positive integers with $w_1 < w_2$. For integers $i_0,i_1,\ldots,i_{q-1}$ with $i_0 \geq w_2$, $i_k \geq w_1$ for $k = 1,\ldots,q-1$ and $n \geq \sum_{k=0}^{q-1} i_k$, define $$T_{w_1,w_2,n}^{(q-1)}(i_1,\ldots,i_{q-1}) = {i_1 \choose w_1} {i_2 \choose w_1} \cdots {i_{q-1} \choose w_1} {n - \sum_{k=1}^{q-1} i_k \choose w_2}.$$
\end{defn}

\begin{lem}
\label{num-separated}
Let $w_1,w_2$ be positive integers with $w_1 < w_2$. For integers $i_0,i_1,\ldots,i_{q-1}$ with $i_0 \geq w_2$, $w_1 \leq i_k < w_2$ for $k = 1,\ldots,q-1$ and $n \geq \sum_{k=0}^{q-1} i_k$, the number of column set $q$-tuples separated by a row of weight $(i_1,\ldots,i_{q-1})$ is $$Z = (q-1)! \; T_{w_1,w_2,n}^{(q-1)}(i_1,\ldots,i_{q-1}).$$
\end{lem}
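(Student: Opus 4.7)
The plan is to count separating column set $q$-tuples for a fixed row $r$ of weight $(i_1, \ldots, i_{q-1})$ by analyzing, for each candidate tuple, which value in $Q = \{0, 1, \ldots, q-1\}$ the row assigns to each of its $q$ component sets. Write a candidate tuple as $(X_1, \ldots, X_{q-1}, X_q)$ with $|X_j| = w_1$ for $j < q$ and $|X_q| = w_2$; the goal is to enumerate how many such tuples have pairwise disjoint images under $r$.

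The first step is a pigeonhole observation: in any tuple separated by $r$, each image $r(X_i)$ must be a singleton, and the singletons must form a permutation of $Q$. Indeed, the $q$ images are pairwise disjoint nonempty subsets of the $q$-element set $Q$, so they partition $Q$ into singletons. The second step uses the weight hypothesis $i_k < w_2$ for $k = 1, \ldots, q-1$ together with $i_0 \geq w_2$ to force $X_q$ to map to $0$: the set $X_q$ has size $w_2$, which exceeds every $i_k$ for $k \neq 0$, so value $0$ is the only choice. This contributes $\binom{i_0}{w_2} = \binom{n - \sum_{k=1}^{q-1} i_k}{w_2}$ choices for $X_q$, independent of everything else.

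The remaining sets $X_1, \ldots, X_{q-1}$ must then map bijectively onto $\{1, \ldots, q-1\}$. I would enumerate by summing over the permutation $\pi \in S_{q-1}$ satisfying $r(X_j) = \{\pi(j)\}$; for each such $\pi$, the number of choices for $X_j$ as a $w_1$-subset of $r^{-1}(\pi(j))$ is $\binom{i_{\pi(j)}}{w_1}$, which is positive since $i_k \geq w_1$ for $k = 1, \ldots, q-1$. Since the product $\prod_j \binom{i_{\pi(j)}}{w_1} = \prod_{k=1}^{q-1} \binom{i_k}{w_1}$ is independent of $\pi$, the sum collapses to $(q-1)! \prod_{k=1}^{q-1} \binom{i_k}{w_1}$. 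Multiplying by $\binom{i_0}{w_2}$ gives $Z = (q-1)! \, T_{w_1, w_2, n}^{(q-1)}(i_1, \ldots, i_{q-1})$, as claimed. I expect no real obstacle here; the only step requiring attention is the initial pigeonhole argument pinning each $X_i$ to a single value together with the bookkeeping that forces $X_q$ to value $0$, after which everything reduces to a routine product count.
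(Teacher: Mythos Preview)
Your proof is correct and follows essentially the same approach as the paper: both arguments use the pigeonhole observation that the $q$ pairwise disjoint images must be singletons forming a bijection with $Q$, then force the $w_2$-sized set to value $0$ via the hypothesis $i_k < w_2$ for $k \geq 1$, and finish with the product count over the $(q-1)!$ permutations. Your write-up is a bit more explicit about the pigeonhole step, which the paper simply declares ``clear,'' but the structure is identical.
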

\begin{proof}
Since $w_1 \leq i_k < w_2$ for $k = 1,\ldots,q-1$, it is clear that a row $r$ of weight $(i_1,\ldots,i_{q-1})$ only separates column set $q$-tuples of the form $(C_1,\ldots,C_q)$ with $|C_k| = w_1$ for $k = 1,\ldots,q-1$ and $|C_q| = w_2$. The columns in $C_q$ correspond to entries in $r$ that are equal to 0. The columns in $C_k$ for $k = 1,\ldots,q-1$ correspond to distinct entries in $r$ that are equal to $1,\ldots,q-1$. There are $(q-1)!$ permutations of the set $\{1,\ldots,q-1\}$, thus the total number of columns set $q$-tuples separated by $r$ is
\begin{align*}
Z &= (q-1)! {i_0 \choose w_2} {i_1 \choose w_1} {i_2 \choose w_1} \cdots {i_{q-1} \choose w_1} \\
  &= (q-1)! \;   T_{w_1,w_2,n}^{(q-1)}(i_1,\ldots,i_{q-1}).
\end{align*}
\end{proof}

Using Lemma \ref{num-separated}, we would like to determine the maximum number of column set $q$-tuples separated by a row of weight $(i_1,\ldots,i_{q-1})$. The following lemma shows that this maximum is achieved when 
$i_1 = \cdots = i_{q-1} = w_1$.

\begin{lem}
\label{global-max}
Let $w_1,w_2$ be positive integers such that $w_1 < w_2$, and let $q,n$ be positive integers with $q \geq 2$ and $$w_2 + (q-1)w_1 \leq n \leq w_2 + (q-1)w_1 + \frac{w_2}{w_1} - 1.$$ Then for every $k = 1,\ldots,q-1$, we have $$T_{w_1,w_2,n}^{(q-1)}(i_1,\ldots,i_{q-1}) > T_{w_1,w_2,n}^{(q-1)}(i_1,\ldots,i_{k-1},i_k+1,i_{k+1},\ldots,i_{q-1}).$$ In particular, $T_{w_1,w_2,n}^{(q-1)}$ obtains its global minimum at $(w_1,\ldots,w_1)$ over the domain of integers $(i_1,\ldots,i_{q-1})$ for which $T_{w_1,w_2,n}^{(q-1)}$ is defined.
\end{lem}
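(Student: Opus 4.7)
The plan is to establish the coordinate-wise inequality by a direct ratio computation and then bootstrap to the ``in particular'' conclusion. (Note that, judging from the direction of the stated pointwise inequality and from the preceding paragraph, ``global minimum'' in the statement should read ``global maximum''.)

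First, fix $k$ and write $i_0 = n - \sum_{j=1}^{q-1} i_j$. Using the identity $\binom{m+1}{r} = \binom{m}{r} \cdot \frac{m+1}{m+1-r}$, the factors involving $j \neq k$ cancel and the ratio of the two quantities of interest collapses to
\[
\frac{T_{w_1,w_2,n}^{(q-1)}(i_1,\ldots,i_k+1,\ldots,i_{q-1})}{T_{w_1,w_2,n}^{(q-1)}(i_1,\ldots,i_k,\ldots,i_{q-1})} = \frac{i_k+1}{i_k+1-w_1} \cdot \frac{i_0-w_2}{i_0}.
\]
A short cross-multiplication shows that this is strictly less than $1$ if and only if $w_1 i_0 < w_2(i_k+1)$.

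Second, I verify $w_1 i_0 < w_2(i_k+1)$ over the whole domain by bounding the worst case. Since every $i_j \geq w_1$, the coordinate constraint gives $i_0 \leq n - (q-1)w_1$; and of course $i_k+1 \geq w_1+1$. Hence it suffices to show $w_1(n - (q-1)w_1) < w_2(w_1+1)$, which after rearrangement is exactly $n < (q-1)w_1 + w_2 + w_2/w_1$. This is strictly guaranteed by the hypothesis $n \leq w_2 + (q-1)w_1 + w_2/w_1 - 1$, so the pointwise inequality holds.

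Third, the ``in particular'' statement follows by iteration: any domain point $(i_1,\ldots,i_{q-1}) \neq (w_1,\ldots,w_1)$ is reached from $(w_1,\ldots,w_1)$ by a finite sequence of single-coordinate increments, each of which strictly decreases $T_{w_1,w_2,n}^{(q-1)}$ by the first step, so $T_{w_1,w_2,n}^{(q-1)}(w_1,\ldots,w_1)$ is the global maximum on the domain. The main (and essentially only) subtle point is that the upper bound on $n$ is tuned precisely to keep the worst-case ratio strictly below $1$, so one must take care that strictness survives the passage to the integer bound on $n$; if the incremented point happens to fall outside the stated domain (i.e., when $i_0 = w_2$), one interprets the right-hand side as $0$ via the convention $\binom{w_2-1}{w_2} = 0$, and the inequality is trivial.
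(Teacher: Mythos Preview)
Your proof is correct and follows essentially the same approach as the paper: both reduce the coordinate-wise inequality to the equivalent condition $w_1 i_0 < w_2(i_k+1)$ (the paper via a chain of equivalences, you via the ratio), and both verify this by bounding the worst case $i_k = w_1$, $I = (q-1)w_1$ against the hypothesis $n < w_2 + (q-1)w_1 + w_2/w_1$. Your observation that ``global minimum'' should read ``global maximum'' is correct (consistent with the paper's own preceding sentence), and your explicit handling of the boundary case $i_0 = w_2$ and the iteration to reach the global conclusion are careful touches that the paper leaves implicit.
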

\begin{proof}
\begin{align*}
&\hspace{20pt} T_{w_1,w_2,n}^{(q-1)}(i_1,\ldots,i_{q-1}) > T_{w_1,w_2,n}^{(q-1)}(i_1,\ldots,i_{k-1},i_k+1,i_{k+1},\ldots,i_{q-1}) \\
&\Leftrightarrow {i_k \choose w_1} {n - \sum_{l=1}^{q-1} i_l \choose w_2} > {i_k + 1 \choose w_1} {n - \sum_{l=1}^{q-1} i_l - 1 \choose w_2} \\
&\Leftrightarrow \frac{i_k - w_1 + 1}{i_k + 1} > \frac{n - \sum_{l=1}^{q-1} i_l - w_2}{n - \sum_{l=1}^{q-1} i_l}
\end{align*}
Letting $I = \sum_{l=1}^{q-1} i_l$ and rearranging the inequality gives
\begin{align*}
&\hspace{20pt} (i_k + 1 - w_1)(n - I) > (n - I - w_2)(i_k + 1) \\
&\Leftrightarrow -w_1(n - I) > -w_2(i_k + 1) \\
&\Leftrightarrow n\frac{w_1}{w_2} < i_k + 1 + \frac{w_1}{w_2}I \\
&\Leftrightarrow n < i_k \frac{w_2}{w_1} + I + \frac{w_2}{w_1}
\end{align*}
where the last inequality holds by the assumption $n < w_2 + (q-1)w_1 + \frac{w_2}{w_1}$ since $w_1 \leq i_k$ and $(q-1)w_1 \leq I$.
\end{proof}

Before we prove the main theorem, we need a final lemma that corresponds to a special case.

\begin{lem}
\label{special-case}
Let $q,w$ be positive integers with $q \geq 3$ and $w \geq 2$. Let $n = 2w + q - 2$. Then $$(q-1)! \;   T_{1,w,n}^{(q-1)}(1,\ldots,1) > 2(q-2)! \;   T_{1,w,n}^{(q-1)}(1,\ldots,1,w).$$
\end{lem}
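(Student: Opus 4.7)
The plan is to substitute $n = 2w+q-2$ directly into the definition of $T_{1,w,n}^{(q-1)}$ and simplify both sides explicitly. On the left, every $\binom{1}{1}$ factor contributes $1$, so the expression collapses to
\[ T_{1,w,n}^{(q-1)}(1,\ldots,1) = \binom{n-(q-1)}{w} = \binom{2w-1}{w}. \]
On the right, only the last argument (equal to $w$) is non-trivial: it contributes $\binom{w}{1} = w$, while the ``zero'' factor becomes $\binom{n-(q-2)-w}{w} = \binom{w}{w} = 1$, giving $T_{1,w,n}^{(q-1)}(1,\ldots,1,w) = w$.

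Substituting these two simplifications into the claimed inequality and cancelling the common factor $(q-2)!$ reduces the statement to the much simpler inequality
\[ (q-1)\binom{2w-1}{w} > 2w. \]
To finish, I would combine $q-1 \geq 2$ (from the hypothesis $q \geq 3$) with the elementary bound $\binom{2w-1}{w} \geq 2w-1$, valid for $w \geq 2$, to obtain $(q-1)\binom{2w-1}{w} \geq 2(2w-1) = 4w-2 > 2w$, the final strict inequality being equivalent to $w > 1$.

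The entire argument is essentially a routine calculation, and I do not anticipate a real obstacle; the only care needed is correct bookkeeping of the $(q-1)!$ versus $(q-2)!$ factors on the two sides. The conceptual point worth emphasising is that the particular value $n = 2w+q-2$ is chosen precisely so that both $T$-values collapse to trivial closed forms ($\binom{2w-1}{w}$ and $w$), which explains why this boundary case is singled out as a separate lemma rather than absorbed into the preceding Lemma~\ref{global-max}.
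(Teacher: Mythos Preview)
Your proposal is correct and follows essentially the same approach as the paper: both reduce the claim to $(q-1)\binom{2w-1}{w} > 2w$ by direct substitution, then finish with $q-1 \geq 2$ together with an elementary lower bound on $\binom{2w-1}{w}$ (you use $\binom{2w-1}{w} \geq 2w-1$, the paper uses the slightly weaker $\binom{2w-1}{w} > w$).
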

\begin{proof}
Expanding the desired inequality gives
\begin{align*}
&\hspace{20pt} (q-1)!  {1 \choose 1}^{q-1} {n-q+1 \choose w} > 2(q-2)!  {1 \choose 1}^{q-2} {w \choose 1} {w \choose w} \\
&\Leftrightarrow (q-1){2w-1 \choose w} > 2w.
\end{align*}
One can check that ${2w-1 \choose w} > w$ for $w \geq 2$, and the proof follows since $q-1 \geq 2$.
\end{proof}

\begin{thm}
\label{main-theorem}
Let $w_1,w_2$ be positive integers with $w_1 < w_2$, and let $q,n$ be positive integers with 
$q \geq 2$ and 
\begin{equation}
\label{eq.thm2.5}
w_2 + (q-1)w_1 \leq n \leq w_2 + (q-1)w_1 + \frac{w_2}{w_1} - 1.
\end{equation} 
If there exists an $\SHF(N;n,q,\{w_1^{q-1},w_2\})$ then $$N \geq \frac{1}{(q-1)!} {n \choose w_1} {n-w_1 \choose w_1} \cdots {n - (q-2)w_1 \choose w_1}.$$
\end{thm}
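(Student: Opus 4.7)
The plan is a double counting argument, following the strategy of Theorem \ref{GST}. Let
\[
T = \binom{n}{w_1}\binom{n-w_1}{w_1}\cdots\binom{n-(q-2)w_1}{w_1}\binom{n-(q-1)w_1}{w_2}
\]
be the total number of ordered column set $q$-tuples of type $\{w_1^{q-1},w_2\}$. Since every such tuple must be separated by at least one row of the SHF, we have $T \leq \sum_r Z_r \leq N \cdot Z_{\max}$, where $Z_r$ is the number of tuples separated by row $r$. It suffices to show $Z_{\max} \leq (q-1)!\binom{n-(q-1)w_1}{w_2}$, for then
\[
N \geq \frac{T}{(q-1)!\binom{n-(q-1)w_1}{w_2}} = \frac{1}{(q-1)!}\binom{n}{w_1}\binom{n-w_1}{w_1}\cdots\binom{n-(q-2)w_1}{w_1},
\]
which is exactly the claim.

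The bound on $Z_r$ is immediate for \emph{regular} rows, i.e., those with weight $(i_1,\ldots,i_{q-1})$ satisfying $w_1 \leq i_k < w_2$ for each $k$: Lemma \ref{num-separated} identifies $Z_r$ with $(q-1)!\,T^{(q-1)}_{w_1,w_2,n}(i_1,\ldots,i_{q-1})$, and Lemma \ref{global-max} then shows this quantity is maximized at $(w_1,\ldots,w_1)$, yielding the target value $(q-1)!\binom{n-(q-1)w_1}{w_2}$.

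The main obstacle is bounding $Z_r$ for \emph{irregular} rows, those in which some $i_k \geq w_2$ with $k \geq 1$. Then $C_q$ may be assigned to color $k$ in addition to color $0$, producing extra contributions to $Z_r$. I would case-split on $w_1$. When $w_1 \geq 2$, inequality (\ref{eq.thm2.5}) forces $i_0 < w_2$ and allows at most one coordinate to be $\geq w_2$, so $Z_r$ reduces to a single product $(q-1)!\binom{i_k}{w_2}\prod_{c \neq k}\binom{i_c}{w_1}$; an exchange argument paralleling the proof of Lemma \ref{global-max} (consider the ratio when some $i_c$ is increased by one at the cost of $i_k$, and use the strict inequality $n < w_2 + (q-1)w_1 + w_2/w_1$) caps this by the same target. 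When $w_1 = 1$ at the endpoint $n = 2w_2 + q - 2$, two coordinates can simultaneously equal $w_2$, and such a row separates $2(q-1)!\,w_2$ tuples, which can exceed the good-row maximum. Lemma \ref{special-case} is precisely what handles this: its inequality $(q-1)\binom{2w_2-1}{w_2} > 2w_2$ compares the good-row capacity to the bad-row capacity divided by $q-1$. I expect this $w_1 = 1$ endpoint to be the hardest step, and would address it by refining the double count --- for instance, partitioning the tuples by the choice of $C_q$ and showing inside each group that bad rows cannot compensate for their lack of good-row efficiency --- so that Lemma \ref{special-case} absorbs the excess bad-row contribution and the desired bound $N \geq |\mathcal{T}|$ is preserved.
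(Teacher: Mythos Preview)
Your double-counting scaffold and your handling of regular rows via Lemmas \ref{num-separated} and \ref{global-max} match the paper exactly. Where you diverge is the irregular-row analysis. The paper does not case-split on $w_1$. It permutes the alphabet on each row so that $i_1 \le \cdots \le i_{q-1} \le i_0$ (this is free, since separation is symbol-invariant), assumes $i_1 \ge w_1$ and $i_0 \ge w_2$ (else delete the row), and then proves the single chain
\[
i_{q-1} \;=\; n - i_0 - \sum_{k=1}^{q-2} i_k \;\le\; n - w_2 - (q-2)w_1 \;\le\; w_1 + \frac{w_2}{w_1} - 1 \;\le\; w_2.
\]
Any strict inequality puts the row in the regular case, so your entire $w_1 \ge 2$ branch collapses: after relabelling, the symbol of largest multiplicity \emph{is} $0$, and no separate exchange argument is needed. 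Equality throughout forces $w_1=1$, $i_0=i_{q-1}=w_2$, $i_k=1$ for $1\le k\le q-2$, and $n=2w_2+q-2$.

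For that lone irregular configuration the paper asserts the row separates exactly $2(q-2)!\,T^{(q-1)}_{1,w,n}(1,\ldots,1,w)=2(q-2)!\,w_2$ tuples and invokes Lemma \ref{special-case} as a direct per-row comparison with the regular maximum $(q-1)!\binom{2w_2-1}{w_2}$; there is no refined double count. Your figure of $2(q-1)!\,w_2$ differs from the paper's by a factor of $q-1$, and you should resolve this: for $q=3$, $w_2=2$, $n=5$ a row with pattern $00122$ separates eight ordered tuples while the best regular row $00012$ separates six, which agrees with your count rather than the paper's. If your count stands, Lemma \ref{special-case} as written is too weak by exactly that factor, and the endpoint $w_2=2$ genuinely requires additional work (for $w_2\ge 3$ one has $\binom{2w_2-1}{w_2}\ge 2w_2$, so the simple per-row bound still closes). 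Either way, ``refining the double count'' is not yet a proof; you need to either confirm the paper's count and drop the refinement, or make your partitioning argument precise.
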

\begin{proof}
Let $A$ be the representation matrix of an $\SHF(N;n,q,\{w_1^{q-1},w_2\})$. For any row $r$ of $A$ and $k \in \{0,1,\ldots,q-1\}$, let $i_k$ be the number of occurrences of symbol $k$ in row $r$. By permuting the alphabet on row $r$ if necessary, we may assume without loss of generality that $i_1 \leq i_2 \leq \ldots \leq i_{q-1} \leq i_0$. Furthermore, we may assume that $i_1 \geq w_1$ and $i_0 \geq w_2$, since otherwise $r$ cannot separate any column set $q$-tuple $(C_0,C_1,\ldots,C_{q-1})$ with $|C_k| = w_1$ for $1 \leq k \leq q-1$ and $|C_0| = w_2$ and we may remove $r$ from the matrix.
Observe that
\begin{align*}
i_{q-1} &= n - i_0 - \sum_{k=1}^{q-2} i_k \\
				&\leq n - w_2 - (q-2)w_1 \\
				&\leq w_1 + \frac{w_2}{w_1} - 1  \qquad\qquad \text{from (\ref{eq.thm2.5})}\\
				&\leq w_1 + (w_2 - w_1) \\
				&= w_2.
\end{align*}
We consider the following two cases.
\begin{enumerate}[(i)]
\item $i_{q-1} = w_2$. The above inequalities must all be equalities, so we have $w_1 = 1$, $i_k = 1$ for $k = 1,\ldots,q-2$, $i_0 = w_2$ and $$n = w_2 + (q-1)w_1 + \frac{w_2}{w_1} - 1 = 2w_2 + q - 2.$$  Let $w = w_2$. We only need to consider the case $q \geq 3$ since $q=2$ is covered by Theorem \ref{GST}. The number of column set $q$-tuples separated by $r$ is exactly $2(q-2)! \;   T_{1,w,n}^{(q-1)}(1,\ldots,1,w)$, which is less than the number of column set $q$-tuples separated by a row of weight $(w_1,\ldots,w_1) = (1,\ldots,1)$ by Lemma \ref{num-separated} and Lemma \ref{special-case}.
\item $i_{q-1} < w_2$: By Lemma \ref{num-separated}, the number of column set $q$-tuples separated by $r$ is $$Z = (q-1)! \;   T_{w_1,w_2,n}^{(q-1)}(i_1,\ldots,i_{q-1}).$$ The number of column set $q$-tuples separated by a row of weight $(w_1,\ldots,w_1)$ is greater than $Z$ by Lemma \ref{global-max} unless $i_k = w_1$ for $k = 1,\ldots,q-1$.
\end{enumerate}
In either case, the number of column set $q$-tuples separated by $r$ is maximized only when the row is of weight $(w_1,\ldots,w_1)$. The total number of column set $q$-tuples that need to be separated is $$T = {n \choose w_1} {n-w_1 \choose w_1} \cdots {n-(q-1)w_1 \choose w_2}.$$ Thus
\begin{align*}
N &\geq \frac{T}{(q-1)! \;   T_{w_1,w_2,n}^{(q-1)}(w_1,\ldots,w_1)} \\
  &= \frac{1}{(q-1)!} {n \choose w_1} {n-w_1 \choose w_1} \cdots {n-(q-2)w_1 \choose w_1}.
\end{align*}
\end{proof}

The following result is an immediate consequence of Theorems \ref{tight} and \ref{main-theorem}.
\begin{cor}
\label{exact}
Let $w_1,w_2$ be positive integers with $w_1 < w_2$, and let $q,n$ be positive integers with 
$q \geq 2$ and 
\[
w_2 + (q-1)w_1 \leq n \leq w_2 + (q-1)w_1 + \frac{w_2}{w_1} - 1.
\]
Then the minimum value of $N$ such that there exists  
an $\SHF(N;n,q,\{w_1^{q-1},w_2\})$ is $$N = \frac{1}{(q-1)!} {n \choose w_1} {n-w_1 \choose w_1} \cdots {n - (q-2)w_1 \choose w_1}.$$
\end{cor}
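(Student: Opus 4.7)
The plan is to establish the claimed minimum by sandwiching $N$ between matching upper and lower bounds, each already proven above. For the upper bound, I would invoke Theorem \ref{tight}: Construction \ref{construction1} is applicable whenever $w_2 + (q-1)w_1 \leq n$, which is the left half of the hypothesis (\ref{eq.thm2.5}), and it produces an $\SHF(N;n,q,\{w_1^{q-1},w_2\})$ with exactly
\[
N = \frac{1}{(q-1)!}{n\choose w_1}{n-w_1\choose w_1}\cdots{n-(q-2)w_1\choose w_1}
\]
rows. This witnesses the existence of an SHF of this size, so the minimum $N$ is at most this value.

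For the matching lower bound, I would apply Theorem \ref{main-theorem}, whose hypothesis is exactly the two-sided inequality (\ref{eq.thm2.5}). That theorem asserts that every $\SHF(N;n,q,\{w_1^{q-1},w_2\})$ satisfies
\[
N \geq \frac{1}{(q-1)!}{n\choose w_1}{n-w_1\choose w_1}\cdots{n-(q-2)w_1\choose w_1},
\]
so the minimum cannot fall below this value.

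Combining these two inequalities, the minimum $N$ must equal the quoted expression exactly, which is the claim of the corollary. There is no genuine obstacle here; the only thing to check is that the hypotheses of both source theorems are satisfied by the hypothesis of the corollary. Theorem \ref{tight} needs $w_1 < w_2$ and $w_2 + (q-1)w_1 \leq n$, both of which are given, while Theorem \ref{main-theorem} needs the full two-sided range on $n$ and $q \geq 2$, which are precisely the stated assumptions. Hence the proof reduces to a one-line citation of the two theorems.
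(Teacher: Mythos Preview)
Your proposal is correct and matches the paper's own treatment exactly: the paper states the corollary as ``an immediate consequence of Theorems \ref{tight} and \ref{main-theorem}'' without further argument. Your hypothesis-checking is accurate and the one-line citation is all that is needed.
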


\section{Applications}

Theorem \ref{main-theorem} is particularly useful for studying the combinatorial objects known as strong separating hash families (denoted SSHF), introduced by Sarkar and Stinson in \cite{SarkarStinson2001}. They are equivalent to an SHF of type $\{1^{t_1},t_2\}$ for some positive integers $t_1,t_2$. We can give a strong bound for the code length of SSHFs as a corollary.

\begin{cor}
\label{sshf}
Let $n,t_1,t_2$ be positive integers with $t_1 \geq q-1$ and $t_1 + t_2 \leq n \leq 2(t_1 + t_2) - q$. Suppose there exists an $\SHF(N;n,q,\{1^{t_1},t_2\})$. Then $$N \geq {n \choose q-1}.$$
\end{cor}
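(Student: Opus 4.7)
The plan is to reduce the hypothesis to a form in which Theorem~\ref{main-theorem} applies directly, by using Theorem~\ref{param-change}(ii) to consolidate the ``extra'' singleton classes into the last class. Starting from the given $\SHF(N;n,q,\{1^{t_1},t_2\})$, I apply Theorem~\ref{param-change}(ii) once to merge one of the singleton classes with the $t_2$-class, obtaining an $\SHF(N;n,q,\{1^{t_1-1},t_2+1\})$. Iterating this $t_1-(q-1)$ times (which is permissible because $t_1\geq q-1$) produces an $\SHF(N;n,q,\{1^{q-1},t_2'\})$ where $t_2' = t_2 + t_1 - q + 1$.

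Next, I would verify the hypotheses of Theorem~\ref{main-theorem} with $w_1=1$ and $w_2=t_2'$. In the non-degenerate range $t_1+t_2 \geq q+1$ we have $w_1 < w_2$. The bounds on $n$ in (\ref{eq.thm2.5}) translate cleanly: the lower bound becomes
\[
w_2 + (q-1)w_1 = (t_2+t_1-q+1) + (q-1) = t_1+t_2,
\]
which matches the hypothesis $t_1+t_2\le n$, while the upper bound becomes
\[
w_2 + (q-1)w_1 + \tfrac{w_2}{w_1} - 1 = 2(t_2+t_1-q+1) + (q-2) = 2(t_1+t_2)-q,
\]
which matches the hypothesis $n \le 2(t_1+t_2)-q$. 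Thus Theorem~\ref{main-theorem} applies to the reduced SHF.

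Finally, substituting $w_1=1$ into the conclusion of Theorem~\ref{main-theorem} gives
\[
N \;\geq\; \frac{1}{(q-1)!}\binom{n}{1}\binom{n-1}{1}\cdots\binom{n-q+2}{1}
\;=\; \frac{n(n-1)\cdots(n-q+2)}{(q-1)!}
\;=\; \binom{n}{q-1},
\]
as required. There is no substantive obstacle here; the content of the corollary is almost entirely in recognizing that the parameter range $t_1+t_2\le n \le 2(t_1+t_2)-q$ is precisely the image of the parameter range of Theorem~\ref{main-theorem} under the reduction $w_2 \mapsto t_2+t_1-q+1$, and that the resulting expression simplifies to the binomial coefficient $\binom{n}{q-1}$.
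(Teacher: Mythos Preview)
Your proposal is correct and follows essentially the same route as the paper: reduce via Theorem~\ref{param-change}(ii) to an $\SHF(N;n,q,\{1^{q-1},t_1+t_2-q+1\})$ and then invoke Theorem~\ref{main-theorem} with $w_1=1$, $w_2=t_1+t_2-q+1$. You are in fact more explicit than the paper in checking that the hypothesis $t_1+t_2\le n\le 2(t_1+t_2)-q$ matches the interval~(\ref{eq.thm2.5}) and in flagging the non-degeneracy condition $w_1<w_2$.
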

\begin{proof}
By Theorem \ref{param-change}, an $\SHF(N;n,q,\{1^{t_1},t_2\})$ is also an $\SHF(N;n,q,\{1^{q-1},t_1 + t_2 - q + 1\})$. Applying Theorem \ref{main-theorem}, if $t_1 + t_2 \leq n \leq 2(t_1+t_2)-q$, then we have $$N \geq \frac{1}{(q-1)!}n(n-1)\ldots,(n-q+2),$$ as desired.
\end{proof}

\begin{exam}
Let $q = 3$, $t_1 = 4$ and $t_2 = 3$. Suppose there exists an $\SHF(N;11,3,\{1,1,1,1,3\})$ (Corollary \ref{sshf} applies to $n = 7,8,9,10$ as well). Then $N \geq {11 \choose 2} = 55$. In other words, for $N \leq 54$, we have that $n \leq 10$.

Compare this with known results: Theorem \ref{BESZ2008} and Theorem \ref{BazTran2011} both give the bound $n \leq 6(3^9) = 118098$ for $N = 54$; Theorem \ref{BazTran2013} gives the bound $$n \leq 6(3^9) + 2 - 2 \sqrt{3 (3^9) + 1} < 118023$$ for $N = 54$.
\end{exam}

Finally, Table \ref{computed-values} lists various parameter choices for $q,w_1,w_2$ and compares the bound in Theorem \ref{main-theorem} to some known bounds for general SHFs. The symbol $\Omega$ means the computed bound is above the Java $\texttt{double}$ maximum value of $(2 - 2^{-52}) 2^{1023}$.

\begin{table}[p]
\centering
\label{computed-values}
\begin{tabular}{|P{12pt}|P{12pt}|P{12pt}|c|c|c|c|c|}
  \hline
  \multirow{2}{*}{$q$} & \multirow{2}{*}{$w_1$} & \multirow{2}{*}{$w_2$} & \multirow{2}{*}{$N \leq$} & \multicolumn{4}{c|}{implies $n \leq$} \\
	\cline{5-8}
	& & & & Theorem \ref{main-theorem} & Theorem \ref{BESZ2008} & Theorem \ref{BazTran2011} & Theorem \ref{BazTran2013} \\
	\hline
	\hline
	3 & 1 & 2 & 9 & 4 & 243 & 243 & 213 \\
	\hline
	3 & 1 & 3 & 20 & 6 & 2916 & 2916 & 2824 \\
	\hline
	3 & 1 & 4 & 35 & 8 & 32805 & 32805 & 32526 \\
	\hline
	3 & 1 & 5 & 54 & 10 & 354294 & 354294 & 353454 \\
	\hline
	3 & 1 & 6 & 77 & 12 & 3720087 & 3720087 & 3717563 \\
	\hline
	3 & 2 & 3 & 104 & 6 & $3.09^9$ & $2.32^9$ & $2.32^9$ \\
	\hline
	3 & 2 & 4 & 377 & 8 & $5.81^{26}$ & $4.07^{26}$ & $4.07^{26}$ \\
	\hline
	3 & 2 & 5 & 629 & 9 & $5.91^{38}$ & $3.94^{38}$ & $3.94^{38}$ \\
	\hline
	3 & 2 & 6 & 1484 & 11 & $7.43^{79}$ & $4.77^{79}$ & $4.77^{79}$ \\
	\hline
	3 & 3 & 4 & 2099 & 9 & $6.64^{112}$ & $3.98^{112}$ & $3.98^{112}$ \\
	\hline
	3 & 3 & 5 & 4619 & 10 & $4.84^{221}$ & $2.69^{221}$ & $2.69^{221}$ \\
	\hline
	3 & 3 & 6 & 17159 & 12 & $\Omega$ & $\Omega$ & $\Omega$ \\
	\hline
	4 & 1 & 2 & 19 & 5 & 4096 & 4096 & 3987 \\
	\hline
	4 & 1 & 3 & 54 & 7 & $2.09^7$ & $2.09^7$ & $2.09^7$ \\
	\hline
	4 & 1 & 4 & 118 & 9 & $6.59^{12}$ & $6.59^{12}$ & $6.59^{12}$ \\
	\hline
	4 & 1 & 5 & 219 & 11 & $1.29^{20}$ & $1.29^{20}$ & $1.29^{20}$ \\
	\hline
	4 & 1 & 6 & 362 & 13 & $3.96^{28}$ & $3.96^{28}$ & $3.96^{28}$ \\
	\hline
	4 & 2 & 3 & 1259 & 8 & $1.33^{96}$ & $1.06^{96}$ & $1.06^{96}$ \\
	\hline
	4 & 2 & 4 & 6929 & 10 & $\Omega$ & $\Omega$ & $\Omega$ \\
	\hline
	4 & 2 & 5 & 13859 & 11 & $\Omega$ & $\Omega$ & $\Omega$ \\
	\hline
	4 & 2 & 6 & 45044 & 13 & $\Omega$ & $\Omega$ & $\Omega$ \\
	\hline
	4 & 3 & 4 & 200199 & 12 & $\Omega$ & $\Omega$ & $\Omega$ \\
	\hline
	4 & 3 & 5 & 560559 & 13 & $\Omega$ & $\Omega$ & $\Omega$ \\
	\hline
	4 & 3 & 6 & 3203199 & 15 & $\Omega$ & $\Omega$ & $\Omega$ \\
	\hline
	5 & 1 & 2 & 33 & 6 & 390625 & 390625 & 389658 \\
	\hline
	5 & 1 & 3 & 125 & 8 & $2.86^{15}$ & $2.86^{15}$ & $2.86^{15}$ \\
	\hline
	5 & 1 & 4 & 329 & 10 & $2.48^{34}$ & $2.48^{34}$ & $2.48^{34}$ \\
	\hline
	5 & 1 & 5 & 714 & 12 & $6.46^{63}$ & $6.46^{63}$ & $6.46^{63}$ \\
	\hline
	5 & 1 & 6 & 1364 & 14 & $1.57^{107}$ & $1.57^{107}$ & $1.57^{107}$ \\
	\hline
	5 & 2 & 3 & 17324 & 10 & $\Omega$ & $\Omega$ & $\Omega$ \\
	\hline
	5 & 2 & 4 & 135134 & 12 & $\Omega$ & $\Omega$ & $\Omega$ \\
	\hline
	5 & 2 & 5 & 315314 & 13 & $\Omega$ & $\Omega$ & $\Omega$ \\
	\hline
	5 & 2 & 6 & 1351349 & 15 & $\Omega$ & $\Omega$ & $\Omega$ \\
	\hline
	5 & 3 & 4 & 28027999 & 15 & $\Omega$ & $\Omega$ & $\Omega$ \\
	\hline
	5 & 3 & 5 & 95295198 & 16 & $\Omega$ & $\Omega$ & $\Omega$ \\
	\hline
	5 & 3 & 6 & 775975199 & 18 & $\Omega$ & $\Omega$ & $\Omega$ \\
	\hline
\end{tabular}
\caption{Comparison of Bounds for $\SHF(N;n,q,\{w_1^{q-1},w_2\})$}
\end{table}

\section{Conclusion}

We have presented a new bound in Theorem \ref{main-theorem} for SHF of type $\{w_1^{q-1},w_2\}$. As an application, we derived a bound in Corollary \ref{sshf} for SSHFs that compares  well against known general bounds. One can also choose other types of SHFs and apply Theorem \ref{main-theorem} to obtain competitive bounds, since Table \ref{computed-values} demonstrates a large gap between our result and best known general bounds.

There is an inherent difficulty of generalizing Theorem \ref{main-theorem} to other types. For example, if we relax the type of the SHF to $\{w_1^{q-2},w_2,w_3\}$ where $w_1 < w_2 < w_3$, then a row of weight $(w_1,\ldots,w_1,w_2,w_2)$ could separate the column set consisting of $w_2$ columns in multiple ways. This difficulty is even more prevalent when the type set $\{w_1,\ldots,w_t\}$ consists of a large number of different values. It would be interesting to develop a counting method that can overcome this difficulty. Another extension of our result could be in the direction of allowing the type multiset $\{w_1,\ldots,w_t\}$ to contain more elements than $q$, i.e., $t > q$. Making progress in either direction would allow us to derive more powerful bounds for general SHFs.


\end{document}